\documentclass[conference]{IEEEtran}
\IEEEoverridecommandlockouts

\usepackage{cite}
\usepackage{amsmath,amssymb,amsfonts,amsthm}
\usepackage{commath}
\usepackage{mathtools}
\usepackage{color}
\usepackage{algorithmic}
\usepackage{graphicx}
\usepackage{textcomp}
\usepackage{xcolor}
\def\BibTeX{{\rm B\kern-.05em{\sc i\kern-.025em b}\kern-.08em
    T\kern-.1667em\lower.7ex\hbox{E}\kern-.125emX}}

\newcommand{\0}{\pmb{0}}

\newcommand{\Um}{\pmb{U}}

\newcommand{\CN}{\mathcal{CN}}
\newcommand{\E}{\mathbb{E}}

\newcommand{\gv}{\pmb{g}}

\newcommand{\yv}{\pmb{y}}
\newcommand{\uv}{\pmb{u}}
\newcommand{\vv}{\pmb{v}}
\newcommand{\xv}{\pmb{x}}
\newcommand{\zv}{\pmb{z}}
\newcommand{\wv}{\pmb{w}}

\newcommand{\Cm}{\pmb{C}}
\newcommand{\Hm}{\pmb{H}}
\newcommand{\Gm}{\pmb{G}}
\newcommand{\Fm}{\pmb{F}}
\newcommand{\Am}{\pmb{A}}
\newcommand{\Bm}{\pmb{B}}
\newcommand{\Dm}{\pmb{D}}

\newcommand{\Rm}{\pmb{R}}
\newcommand{\Sm}{\pmb{S}}

\newcommand{\Vm}{\pmb{V}}
\newcommand{\Qm}{\pmb{Q}}

\newcommand{\Id}{\pmb{I}}
\newtheorem{proposition}{Proposition}

\newtheorem{lemma}{Lemma}

\newcommand{\rank}{\operatorname{rank}}

\renewcommand{\P}{\mathrm{P}}
\newcommand{\Scal}{\mathcal{S}}
\newcommand{\Xcal}{\mathcal{X}}

\newif\ifdraft

\ifdraft
  \newcommand{\NEW}[1]{\textcolor{red}{#1}}
\else
  \newcommand{\NEW}[1]{#1}
\fi

\DeclareMathOperator{\tr}{tr}

\begin{document}

\title{Capacity Bounds on Doppler OFDM Channels}
\author{\IEEEauthorblockN{Pablo Orellana$\stackrel{\ddag}{}$$\stackrel{^\S}{}$, Zheng Li$\stackrel{\ddag}{}$, 
Jean-Marc Kelif$\stackrel{\ddag}{}$, Sheng Yang$\stackrel{^\S}{}$, and Shlomo Shamai~(Shitz)$\stackrel{^*}{}$}
\IEEEauthorblockA{$^\ddag$ Orange Innovation, 92320, Chatillon, France\\
$^\S$Universit\'e Paris-Saclay, CNRS, CentraleSup\'elec, Laboratoire des signaux et systèmes, 91190, Gif-sur-Yvette, France\\
$^*$Technion, Haifa, 3200003, Israel\\
Email: \{pablo.orellana, zheng1.li, jeanmarc.kelif\}@orange.com,  sheng.yang@centralesupelec.fr, sshlomo@ee.technion.ac.il}}

\maketitle

\begin{abstract}
Low Earth orbit (LEO) satellite systems experience significant Doppler effects due to high mobility. 
While Doppler shifts can be largely compensated, residual frequency uncertainty induces a structured 
form of channel uncertainty that can limit achievable rates.
We model this effect using a block-fading channel of the form $\Hm = \Fm + s\Gm$, where $s$ is an unknown scalar random parameter. We first study this model in a general $N\times N$ MIMO setting. For this channel, we derive achievable rate lower bounds based on explicit transmission schemes and capacity  upper bounds using a duality approach.
We study Gaussian signaling and propose a practical superposition scheme with subspace alignment (SN) and successive interference cancellation, where a coarse-layer stream serves as an implicit pilot for decoding refined-layer data. We characterize asymptotic capacity in the near-coherent and high-SNR regimes, and show via Doppler-OFDM simulations that the proposed SN scheme achieves near-optimal rates with low complexity.
\end{abstract}

\section{Introduction}

While 5G enables high throughput, low latency, and massive connectivity, ubiquitous coverage remains challenging, particularly in remote areas. Non-terrestrial networks (NTNs), especially low Earth orbit (LEO) satellite systems, provide a promising complement but are impaired by high path loss, long delays, and strong Doppler effects. The high orbital velocity of LEO satellites induces rapid frequency variations, causing synchronization errors and inter-carrier interference in OFDM systems and degrading achievable rates. Understanding Doppler effects is therefore critical for NTN design.

Although Doppler effects in NTNs have been widely studied, information-theoretic capacity analyses remain limited. Existing works either assume ideal Doppler estimation and compensation~\cite{goto2018leo,sedin2020throughput}, or focus on Doppler estimation and mitigation techniques~\cite{lin2021doppler,pan2020efficient,yeh2024efficient}, treating Doppler primarily as an impairment rather than a source of fundamental channel uncertainty. Moreover, the lack of channel models that are both analytically tractable and Doppler-faithful further hinders capacity analysis: statistical models rely on stationary assumptions~\cite{ngo2018performance}, while accurate linear time-varying models are generally intractable.

From an information-theoretic perspective, channel uncertainty has long been known to induce behaviors markedly different from the coherent Gaussian case. On the converse side, the duality framework of Lapidoth and Moser provides sharp capacity upper bounds for noncoherent flat-fading channels~\cite{LapidothMoser2003Duality}. On the achievability side, geometric signaling schemes for noncoherent MIMO, such as unitary space--time modulation and Grassmann-manifold coding, characterize the degrees-of-freedom penalty due to unknown fading~\cite{ZhengTse2002Grassmann}. A recurring theme is that uncertainty often leads to non-Gaussian optimal inputs: for noncoherent Rayleigh fading, the capacity-achieving input is discrete with finitely many mass points~\cite{AbouFaycalTrottShamai2001RayleighDiscrete}, and related discrete-structure results extend to Rician models under peakedness constraints~\cite{GursoyPoorVerdu2005RicianPartI}. Phase uncertainty provides another canonical example, where high-SNR capacity characterizations again rely on duality-type arguments and differ fundamentally from coherent AWGN~\cite{Lapidoth2002PhaseNoiseHighSNR, YangShamai17}. These results motivate the present work, which studies a structured Doppler-induced uncertainty model using matching achievability schemes and duality-based upper bounds.

In this paper, we model the residual Doppler effect using a tractable block-fading channel of the form $\Hm = \Fm + s\Gm$, where $s$ is an unknown scalar random parameter. We first study this model in a general $N\times N$ MIMO setting. For this channel, we derive achievable rate lower bounds using explicit transmission schemes and capacity upper bounds via a duality-based approach, and show that rank-one uncertainty limits the system to at most $N-1$ degrees of freedom. On the achievability side, we analyze Gaussian signaling and propose a superposition-based precoding scheme with subspace alignment and successive interference cancellation, where one stream from the coarse layer serves as an implicit pilot to enable decoding of the remaining streams from the refined layer. We also characterize the near-coherent regime with small uncertainty variance. Finally, we apply the results numerically to Doppler-affected OFDM channels in LEO satellite systems.  

The remainder of this paper is organized as follows. Section~II introduces the system model and problem formulation. Section~III presents a lower bound based on Gaussian signaling. The proposed scheme is described and analyzed in  Section~IV. Section~V derives capacity upper bounds and asymptotic behaviors. Section~VI  applies the results to Doppler-affected OFDM channels, and concludes the paper.

\section{Notation}
Scalars are denoted by lowercase letters, e.g., $s$, vectors by bold lowercase
letters, e.g., $\xv,\yv,\zv$, and matrices by bold uppercase letters
(e.g., $\Fm,\Gm,\Hm,\Qm_x$). The $N\times N$ identity matrix is $\Id_N$. We denote $(\cdot)^T$ and $(\cdot)^H$ as the transpose and the Hermitian (conjugate) transpose, respectively.
The trace, determinant, and rank operators are $\tr(\cdot)$, $\det(\cdot)$,
and $\rank(\cdot)$. Expectations are written $\E[\cdot]$. Random variables/vectors following a circularly symmetric complex Gaussian
distribution are written $\mathcal{CN}(\pmb{\mu},\pmb{\Sigma})$. Unless stated
otherwise, additive noise is $\zv\sim\mathcal{CN}(\0,\Id_N)$. The Euclidean
norm is $\|\cdot\|$; more generally, $\|\yv\|_{\Sm}^2 := \yv^H\Sm\yv$ denotes a
quadratic form induced by $\Sm\succeq 0$. The indicator function is
$\pmb{1}(\cdot)$, and $\delta[\cdot]$ denotes the Kronecker delta. The Frobenius norm is $\| \cdot \|_F$. The symbol $(\cdot)^\perp$ denotes the orthogonal complement of a subspace.

\section{System Model and Problem Formulation}
\label{sec:model}

\subsection{Motivating example: OFDM channel with Doppler effects}
\label{subsec:ofdm_motivation}

We briefly recall a standard OFDM model to motivate the abstract channel studied
in this paper. Consider a discrete-time $L$-tap multipath channel
\begin{equation*}
  y[m] = e^{j2\pi \phi[m]} \sum_{l=0}^{L-1} h_l x[m-l] + z[m],
  \quad m=1,\ldots,N+N_{\mathrm{cp}},
\end{equation*}
where $N_{\mathrm{cp}}=L-1$ is the cyclic prefix length, $\{h_l\}$ are the channel
taps, and $z[m]$ is additive white Gaussian noise.
The phase rotation $\phi[m]$ models Doppler effects and is given by
$\phi[m] = \frac{m}{N} f_d$, 
where $f_d$ denotes the Doppler shift normalized by the OFDM subcarrier spacing.
We assume that $f_d$ is random but constant over one OFDM symbol.

After cyclic prefix removal and discrete Fourier transform~(DFT), the
frequency-domain input--output relation can be written as
\begin{equation}
  \tilde{\yv} = \Hm \tilde{\xv} + \tilde{\zv},
\end{equation}
where $\tilde{\xv}$, $\tilde{\yv}$, and $\tilde{\zv}$ denote the frequency-domain
transmit signal, received signal, and noise vector, respectively, and
$\Hm\in\mathbb C^{N\times N}$ is the effective Doppler-OFDM frequency-domain channel matrix \NEW{given by:}
\begin{align}
  &\Hm(i,k) = \biggl(\sum_{l=0}^{L-1} h_l e^{-j2\pi \frac{l}{N} k} \biggr) \cdot \Bm (i,k), \\
  & \Bm(i,k) = \frac{1}{N} \frac{\sin (\pi(f_d + k-i))}{\sin \left(\frac{\pi}{N}(f_d + k-i)\right)} e^{j 2 \pi \Psi(i,k) }, \\
  & \Psi(i,k) = \frac{1}{2} \left[ \left( \frac{2L-1}{N} +1 \right) f_d +\left( 1-\frac{1}{N} \right)(k-i) \right],
\end{align}
\NEW{where $\Bm \in\mathbb C^{N\times N}$ represents the inter-carrier interference~(ICI) caused by the Doppler shift.} In the absence of Doppler ($f_d=0$, e.g., after perfect compensation), $\Hm$ is diagonal; Doppler induces inter-carrier interference through off-diagonal terms.

When the Doppler shift is small, a first-order Taylor expansion of $\Hm(f_d)$
around $f_d=0$ (or, more generally, around a Doppler estimate $\hat f_d$) yields
the approximation
\begin{equation}
  \Hm \approx \Fm + f_d \Gm,
  \label{eq:doppler_linear}
\end{equation}
\NEW{
where for the case of $f_d$ having a small variance}
\begin{align}
  \Fm(i,k) &:= \sum_{l=0}^{L-1} h_l e^{-j2\pi \frac{l}{N} k} \delta[i-k], \\
  \Gm(i,k) &:= \biggl(\sum_{l=0}^{L-1} h_l e^{-j2\pi \frac{l}{N} k} \biggr) \Dm(i,k),
\end{align}%
\begin{align}
    \Dm(i,k) & := \dfrac{\pi}{N}\,\dfrac{e^{-j\tfrac{\pi}{N}(k-i)}}{\sin\!\left(\tfrac{\pi}{N}(k-i)\right)} \pmb{1}(k\ne i) \nonumber \\ & \qquad + j\pi\!\left(1+\dfrac{2L-1}{N}\right) \pmb{1}(k=i),
\end{align}
where $\Fm$ corresponds to the nominal frequency-domain channel and $\Gm$
captures the sensitivity to Doppler-induced inter-carrier interference.

If Doppler is imperfectly compensated but known up to a residual estimation
error, then $\Fm$ is no longer diagonal but remains deterministic and known at
the receiver, while $f_d$ in \eqref{eq:doppler_linear} represents the residual
Doppler error with small variance.
This approximation motivates the following abstract channel model.

\subsection{Abstract channel model}
\label{subsec:abstract_model}

From \eqref{eq:doppler_linear}, we consider the following memoryless channel:
\begin{equation}
  \yv[i] = \Hm[i]\xv[i] + \zv[i], \qquad i=1,\ldots,n,
\end{equation}
where $\xv[i]\in\mathbb C^{N}$ is the channel input and
$\zv[i]\sim\mathcal{CN}(\mathbf 0,\Id_N)$ is additive white Gaussian noise; 
the channel matrix is 
\begin{equation}
  \Hm[i] = \Fm + s[i]\Gm,
  \label{eq:abstract_model}
\end{equation}
where $\Fm,\Gm\in\mathbb C^{N\times N}$ are fixed and known to both transmitter and
receiver, and $\{s[i]\}$ are i.i.d.\ $\mathcal{CN}(0,\sigma^2)$ random variables
modeling channel uncertainty.
Unless stated otherwise, $\Fm$ and $\Gm$ may have any structure.
The input is subject to an average power constraint
\begin{equation}
  \frac{1}{n}\sum_{i=1}^n \mathbb E\|\xv[i]\|^2 \le P.
\end{equation}
Since the channel is stationary and memoryless, its capacity admits the
single-letter characterization
\begin{equation}
  C(P) = \max_{P_X:\,\mathbb E\|\xv\|^2\le P} I(\xv;\yv).
\end{equation}
Conditioned on $\xv$, the output $\yv$ is complex Gaussian with mean
$\pmb{\mu}_{\yv|\xv} = \Fm\xv$
and covariance
\begin{equation}
  \pmb{\Sigma}_{\yv|\xv}
  =
  \Id_N + \sigma^2\,\Gm\xv\xv^H\Gm^H.
\end{equation}
Therefore, the conditional differential entropy is
\begin{equation}
  h(\yv\mid\xv)
  =
  N\log(\pi e)
  + \mathbb E\!\left[
    \log\!\left(1+\sigma^2\|\Gm\xv\|^2\right)
  \right]. \label{eq:h(y|x)}
\end{equation}

Throughout the paper, logarithms are natural and rates are measured in nats.

\section{Achievability with Gaussian Signaling}
\label{sec:gaussian_ach}

We first study achievability using single-layer Gaussian signaling, i.e., let the channel input be distributed as $\xv \sim \mathcal{CN}(\mathbf 0,\Qm_x)$, 
where $\Qm_x\succeq 0$ and $\tr(\Qm_x)\le P$.

\subsection{Optimal decoding}
\label{subsec:gaussian_opt}
Conditioned on a realization $s=s_0$, the channel output is complex Gaussian with zero mean and covariance
\begin{equation}
  \pmb\Sigma_{\yv|s=s_0}
  :=
  \Id + (\Fm+s_0\Gm)\Qm_x(\Fm+s_0\Gm)^H.
  \label{eq:Sigma_y_given_s}
\end{equation}
Hence, $\P_{\yv|s=s_0} = \mathcal{CN}\!\left(\mathbf 0,\;\pmb\Sigma_{\yv|s=s_0}\right)$, and 
the conditional differential entropy is 
\begin{align}
  h(\yv\mid s)
  &=
  N\log(\pi e)
  + \mathbb E_s \!\left[
    \log\det\!\big(\pmb\Sigma_{\yv|s}\big)
  \right].
  \label{eq:hy_given_s_gaussian}
\end{align}
Since conditioning reduces entropy, 
\begin{equation}
  I(\xv;\yv)
  = h(\yv)-h(\yv\mid\xv)
  \ge h(\yv\mid s)-h(\yv\mid\xv).
\end{equation}
Using the conditional covariance structure of $\yv\mid\xv$, this yields
\begin{align}
  I(\xv;\yv)
  &\ge
  \mathbb E_s \!\left[
    \log\det\!\big(\pmb\Sigma_{\yv|s}\big)
  \right]
  - \mathbb E_{\xv}\!\left[
    \log\!\left(1+\sigma^2\,\|\Gm\xv\|^2\right)
  \right].
  \label{eq:gaussian_lb}
\end{align}

Applying Jensen’s inequality to the second term leads to the following result.

\begin{proposition}[Gaussian signaling with optimal decoding]
\label{prop:gaussian_lb}
The channel capacity satisfies
\begin{equation}
  C(P)
  \ge
  \max_{\Qm_x\succeq0:\,\tr(\Qm_x)\le P}
  R_G(\Qm_x),
  \label{eq:RG}
\end{equation}
where
\begin{multline}
  R_G(\Qm_x)
  := 
  \E_s\!\left[
    \log\det\!\big(\Id + (\Fm+s\Gm)\Qm_x(\Fm+s\Gm)^H\big)
  \right]
  \\ -
  \log\!\big(1+\sigma^2\tr(\Gm\Qm_x\Gm^H)\big).
\end{multline}
\end{proposition}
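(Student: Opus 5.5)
The plan is to specialize the single-letter characterization $C(P)=\max_{P_X} I(\xv;\yv)$ to a Gaussian input $\xv\sim\CN(\0,\Qm_x)$ with $\tr(\Qm_x)\le P$. Such an input satisfies the average power constraint, since $\E\|\xv\|^2=\tr(\Qm_x)\le P$, so $C(P)\ge I(\xv;\yv)$ for every admissible $\Qm_x$. Taking the supremum over $\Qm_x$ then gives \eqref{eq:RG}, once we show that $I(\xv;\yv)\ge R_G(\Qm_x)$ for each fixed $\Qm_x$.

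For fixed $\Qm_x$, I would start from the bound \eqref{eq:gaussian_lb}, which was derived just above.

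\textbf{First term.} The bound rests on $h(\yv)\ge h(\yv\mid s)$. Conditioned on $s$, the output $\yv=(\Fm+s\Gm)\xv+\zv$ is a linear image of independent Gaussians, because $\xv$, $s$ and $\zv$ are mutually independent. Hence it is Gaussian with covariance \eqref{eq:Sigma_y_given_s}, and \eqref{eq:hy_given_s_gaussian} follows. Subtracting $h(\yv\mid\xv)$ from \eqref{eq:h(y|x)}, the $N\log(\pi e)$ constants cancel. The first term of \eqref{eq:gaussian_lb} is then exactly the expectation appearing in $R_G(\Qm_x)$.

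\textbf{Second term.} This is the only step that is not immediate. The function $t\mapsto\log(1+\sigma^2 t)$ is concave on $t\ge0$, so Jensen's inequality gives
\begin{equation*}
\E_{\xv}\big[\log(1+\sigma^2\|\Gm\xv\|^2)\big]\le\log\big(1+\sigma^2\,\E\|\Gm\xv\|^2\big).
\end{equation*}
Moreover,
\begin{equation*}
\E\|\Gm\xv\|^2=\E\,\tr(\Gm\xv\xv^H\Gm^H)=\tr(\Gm\Qm_x\Gm^H).
\end{equation*}
Since this term enters \eqref{eq:gaussian_lb} with a minus sign, replacing it by its upper bound preserves the direction of the inequality. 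This yields $I(\xv;\yv)\ge R_G(\Qm_x)$.

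The main point to check is precisely this sign bookkeeping in the Jensen step. I would also note that all expectations are finite, since $s$ is Gaussian and $\log\det$ grows only logarithmically in $|s|$. This ensures that $h(\yv\mid s)$ is well defined and that the bound is meaningful.

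Finally, $R_G$ is continuous in $\Qm_x$ on the compact set $\{\Qm_x\succeq0,\ \tr(\Qm_x)\le P\}$, so the maximum in \eqref{eq:RG} is attained.
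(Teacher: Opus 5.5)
Your proposal is correct and follows the paper's own argument: lower-bound $h(\yv)$ by $h(\yv\mid s)$, evaluate $h(\yv\mid s)$ and $h(\yv\mid\xv)$ as Gaussian log-determinants, then apply Jensen to the concave map $t\mapsto\log(1+\sigma^2 t)$, with the sign handled correctly. Your added checks (power feasibility of the Gaussian input, finiteness of the expectations, and attainment of the maximum by compactness) are harmless and only make the argument more complete.
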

The lower bound \eqref{eq:RG} is tight and coincides with the channel capacity
when $\sigma^2=0$, in which case the channel reduces to a deterministic MIMO channel with matrix $\Fm$.

\subsection{Linear receiver and GMI bound}
\label{subsec:gaussian_lmmse}
Instead of applying the optimal receiver, one can also use a suboptimal linear receiver, which corresponds to the following lower bound.  
\begin{align}
I(\xv;\yv) &= h(\xv) - h(\xv|\yv) \\
&= h(\xv) - h(\xv - \hat{\xv}(\yv)|\yv) \label{eq:shift_inv_giv_y}\\
&\ge h(\xv) - h(\xv - \hat{\xv}(\yv)) \label{eq:cond_red_entropy}\\
&\ge \log\det(\Qm_x) - \log\det(\Qm_{x|y}^{\mathrm{LMMSE}}), \label{eq:lin_rx_lb}
\end{align}
\NEW{where \eqref{eq:shift_inv_giv_y} follows from shift invariance of differential entropy given a random variable, \eqref{eq:cond_red_entropy} is from conditioning reduces entropy and \eqref{eq:lin_rx_lb} is from the fact that Gaussian random vector maximizes differential entropy for a fixed covariance. We choose the LMMSE estimator as the receiver, with error covariance}
\begin{equation}
  \Qm_{x|y}^{\mathrm{LMMSE}}
  :=
  \E\!\left[(\xv-\Am(\yv)\yv)(\xv-\Am(\yv)\yv)^H\right],
  \label{eq:NN_metric_LMMSE}
\end{equation}
with $\Am(\yv)$ the LMMSE estimator matrix matched to the second-order statistics.
\NEW{Then, the output covariance $\Qm_y$ and input--output cross-covariance $\Qm_{x y}$ are given by
\begin{align*}
    \Qm_y & =  \E[\yv \yv^H] = \Fm \Qm_x \Fm^H + \Rm_0,
\end{align*}
and
\begin{align*}
    \Qm_{xy} & = \E[\xv \yv^H] = \Qm_x \Fm^H.
\end{align*}
The LMMSE estimator is therefore
\begin{align}
    \Am & = \Qm_{xy} \Qm_y^{-1} \\
    & = \Qm_x \Fm^H (\Fm \Qm_x \Fm^H + \Rm_0)^{-1},
    \label{eq:lmmse_estimator}
\end{align}
}after \NEW{replacing \eqref{eq:lmmse_estimator} in \eqref{eq:NN_metric_LMMSE} and doing} some algebra, we can obtain
\begin{equation}
  \Qm_{x\mid y}^{\mathrm{LMMSE}}
  =
  \big(\Qm_x^{-1}+\Fm^H\Rm_0^{-1}\Fm\big)^{-1},
  \label{eq:LMMSE_error_cov_single}
\end{equation}
where the equivalent interference covariance matrix $\Rm_0$ is
\begin{equation}
    \Rm_0 := \Id + \sigma^2\,\Gm\Qm_x\Gm^H.
    \label{eq:R_0}
\end{equation}
Then, we get the following lower bound with Gaussian input and linear receiver. 
\begin{proposition}[Gaussian signaling with linear receiver]
\label{prop:gaussian_lb_lmmse}
The channel capacity satisfies
\begin{equation}
  C(P)
  \ge
  \max_{\Qm_x\succeq0:\,\tr(\Qm_x)\le P}
  R_G^{\mathrm{Lin}}(\Qm_x),
  \label{eq:RG_lin}
\end{equation}
where
\begin{equation}
  R_G^{\mathrm{Lin}}(\Qm_x)
  :=
  \log\det\!\big(
    \Id + \Qm_x\Fm^H(\Id+\sigma^2\Gm\Qm_x\Gm^H)^{-1}\Fm
  \big).
  \label{eq:RG_lin_single}
\end{equation}
\end{proposition}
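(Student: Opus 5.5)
The plan is to start from the chain \eqref{eq:shift_inv_giv_y}--\eqref{eq:lin_rx_lb}, which holds for any estimator $\hat{\xv}(\yv)$. With $\xv\sim\CN(\0,\Qm_x)$ we have $h(\xv)=\log\det(\pi e\Qm_x)$. The error $\xv-\Am\yv$ has covariance $\Qm_{x|y}^{\mathrm{LMMSE}}$, so $h(\xv-\Am\yv)\le\log\det(\pi e\,\Qm_{x|y}^{\mathrm{LMMSE}})$. Subtracting, the $\pi e$ factors cancel and we get \eqref{eq:lin_rx_lb}. Assume first $\Qm_x\succ0$; the singular case is handled at the end.

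Next we verify the second-order statistics. Write $\yv=\Fm\xv+s\Gm\xv+\zv$, with $s$, $\xv$, $\zv$ independent and zero mean. The cross terms vanish because $\E[s]=0$. This gives $\Qm_y=\Fm\Qm_x\Fm^H+\sigma^2\Gm\Qm_x\Gm^H+\Id=\Fm\Qm_x\Fm^H+\Rm_0$ and $\Qm_{xy}=\Qm_x\Fm^H$. The key observation is that the effective noise $\wv:=s\Gm\xv+\zv$ has covariance $\Rm_0$ and satisfies $\E[\xv\wv^H]=\0$, although it is not Gaussian and not independent of $\xv$. Only second moments enter the LMMSE error, so this is enough.

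The error covariance is $\Qm_x-\Qm_{xy}\Qm_y^{-1}\Qm_{xy}^H=\Qm_x-\Qm_x\Fm^H(\Fm\Qm_x\Fm^H+\Rm_0)^{-1}\Fm\Qm_x$. By the Woodbury identity this equals $(\Qm_x^{-1}+\Fm^H\Rm_0^{-1}\Fm)^{-1}$, which is \eqref{eq:LMMSE_error_cov_single}. Here $\Rm_0\succeq\Id$ is invertible. Then
\begin{equation*}
\log\det\Qm_x-\log\det\Qm_{x|y}^{\mathrm{LMMSE}}=\log\det\big(\Qm_x(\Qm_x^{-1}+\Fm^H\Rm_0^{-1}\Fm)\big),
\end{equation*}
which is \eqref{eq:RG_lin_single} after substituting \eqref{eq:R_0}. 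Maximizing over feasible $\Qm_x$ gives \eqref{eq:RG_lin}.

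The main obstacle is a singular $\Qm_x$, where $h(\xv)=-\infty$ and the chain above breaks down. I would handle it by continuity. Apply the bound to $\Qm_x+\epsilon\Id$, rescaled to meet the trace constraint. The right side of \eqref{eq:RG_lin_single} is continuous in $\Qm_x$ because $\Rm_0\succeq\Id$ keeps the inverse well defined, so letting $\epsilon\to0$ recovers the bound for singular $\Qm_x$. Alternatively, one could restrict $\xv$ to the range of $\Qm_x$ and repeat the argument in that subspace.
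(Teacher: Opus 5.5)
Your proposal is correct and follows the paper's route: the chain \eqref{eq:shift_inv_giv_y}--\eqref{eq:lin_rx_lb}, the second-order statistics $\Qm_y=\Fm\Qm_x\Fm^H+\Rm_0$ and $\Qm_{xy}=\Qm_x\Fm^H$ (which use $\E[s]=0$), and the matrix-inversion step to \eqref{eq:LMMSE_error_cov_single}. Your continuity argument for singular $\Qm_x$, where $h(\xv)=-\infty$ and $\Qm_x^{-1}$ is undefined, is a sound addition that the paper omits, and it is valid because $\Rm_0\succeq\Id$ makes $R_G^{\mathrm{Lin}}$ continuous.
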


The bound~\eqref{eq:RG_lin} is an instance of the
\emph{generalized mutual information} (GMI)~\cite{lapidoth1996mismatch}, i.e., the
rate achievable with a Gaussian codebook and a nearest-neighbor decoding metric
matched to the LMMSE error covariance~$\Qm_{x\mid y}^{\mathrm{LMMSE}}$.
The bound is tight when $\sigma^2=0$, since the mismatched metric then coincides
with the matched decoder.

While the linear receiver is simple and practical, it treats the self interference as unstructured noise, resulting in an effective disturbance covariance
$\Rm_0$ that can dominate at high signal power or large uncertainty~$\sigma^2$.
In contrast, optimal decoding exploits the structure of the channel uncertainty but leads to decoding metrics that are nonconvex and difficult to implement in practice.
This gap motivates the superposition and subspace-alignment scheme developed in the next section.

\section{Superposition Coding and Subspace Alignment} 

We consider a superposition coding scheme with two independent layers,
\begin{equation}
  \xv = \xv_d + \xv_p,
\end{equation}
where $\xv_p$ carries a coarse layer decoded first and acting as an implicit
pilot, while $\xv_d$ carries a refined data layer, inducing successive
interference cancellation (SIC).
The refined layer is linearly precoded as
\begin{equation}
  \xv_d = \Vm \wv_d,
\end{equation}
where $\wv_d\in\mathbb{C}^{M}$ with $M\le N$, and
$\Vm\in\mathbb{C}^{N\times M}$ is semi-unitary ($\Vm^H\Vm=\Id$) and depends only on
known channel parameters.
The single-layer Gaussian scheme is recovered by setting $\xv_p=0$ and $M=N$,
while a pilot-based scheme corresponds to a deterministic nonzero $\xv_p$.
The achievable rate decomposes as
\begin{equation}
  I(\wv_d,\xv_p;\yv)
  =
  I(\xv_p;\yv)+I(\wv_d;\yv\mid\xv_p).
\end{equation}
We design $\Vm$ so that the refined layer aligns with a receiver subspace
invariant to the unknown coefficient $s$, enabling $\xv_p$ to be decoded without
channel knowledge and then used to decode $\wv_d$.

\subsection{Subspace alignment precoding}

\begin{lemma}
\label{lemma:FG}
Let $\Fm,\Gm\in\mathbb{C}^{N\times N}$. There exists a semi-unitary matrix
$\Vm\in\mathbb{C}^{N\times(N-1)}$, with $\Vm^{H}\Vm=\Id_{N-1}$, such that
\[
\rank\!\big[\,\Fm \Vm\ \ \Gm \Vm\,\big]\le N-1.
\]
We denote by $\mathcal{V}:= \mathcal{V}(\Fm,\Gm)$ the set of all such matrices $\Vm$.
\end{lemma}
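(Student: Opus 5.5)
We need an $(N-1)$-dimensional subspace $\mathcal{S}=\mathrm{range}(\Vm)$ such that $\Fm\mathcal{S}+\Gm\mathcal{S}$ has dimension at most $N-1$. Equivalently, we look for a nonzero vector $\uv\in\mathbb{C}^N$ with $\uv^H\Fm\Vm=\0$ and $\uv^H\Gm\Vm=\0$. Indeed, if such a $\uv$ exists, every column of $[\Fm\Vm\ \ \Gm\Vm]$ lies in $\uv^\perp$, which has dimension $N-1$, so the rank is at most $N-1$.

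The plan is to reverse the order of the choices: fix $\uv$ first and then pick $\Vm$. The conditions $\uv^H\Fm\Vm=\0$ and $\uv^H\Gm\Vm=\0$ say that every column of $\Vm$ is orthogonal to both vectors $\Fm^H\uv$ and $\Gm^H\uv$. Let $\mathcal{W}:=\mathrm{span}\{\Fm^H\uv,\Gm^H\uv\}^\perp$. Then an admissible $\Vm$ with $N-1$ orthonormal columns exists iff $\dim\mathcal{W}\ge N-1$, i.e., iff $\Fm^H\uv$ and $\Gm^H\uv$ are linearly dependent.

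So the key step is to find $\uv\neq\0$ with $\Fm^H\uv$ and $\Gm^H\uv$ linearly dependent. This is a generalized eigenvalue problem for the pencil $(\Fm^H,\Gm^H)$. Consider $p(\lambda):=\det(\Fm^H-\lambda\Gm^H)$, a polynomial in $\lambda$ over $\mathbb{C}$, and split into cases.
\begin{itemize}
\item If $\Gm$ is singular, take $\uv\in\ker\Gm^H$. Then $\Gm^H\uv=\0$, and any vector together with the zero vector is dependent.
\item If $\Gm$ is invertible, then $p$ has degree exactly $N\ge1$, since its leading coefficient is $\pm\det\Gm^H\neq0$. By the fundamental theorem of algebra it has a root $\lambda_0\in\mathbb{C}$. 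Take $\uv\in\ker(\Fm^H-\lambda_0\Gm^H)$, which is nonzero. Then $\Fm^H\uv=\lambda_0\Gm^H\uv$.
\end{itemize}
(Alternatively, one can use the eigenvector of $\Gm^{-H}\Fm^H$.) In both cases $\dim\mathrm{span}\{\Fm^H\uv,\Gm^H\uv\}\le1$, so $\dim\mathcal{W}\ge N-1$.

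Finally, let $\Vm$ be any $N-1$ orthonormal vectors in $\mathcal{W}$, for instance from Gram--Schmidt applied to a basis of $\mathcal{W}$. Then $\Vm^H\Vm=\Id_{N-1}$, and by the first paragraph $\rank[\Fm\Vm\ \ \Gm\Vm]\le N-1$.

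The main obstacle is only the reformulation as a shared left null vector. After that, existence rests on the complex field, which guarantees a generalized eigenvalue. Over $\mathbb{R}$ the argument would fail, but the paper's setting is $\mathbb{C}^{N\times N}$.
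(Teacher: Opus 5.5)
Your proof is correct and matches the paper's argument. You find a nonzero $\uv$ with $\Fm^H\uv$ and $\Gm^H\uv$ collinear, either from $\ker\Gm^H$ or from a root of the pencil determinant, and take $\Vm$ as an orthonormal basis of their common orthogonal complement; your two-case split is slightly leaner than the paper's three cases, because invertibility of $\Gm$ alone makes the pencil polynomial have degree $N$, so the paper's separate singular-$\Fm$ case is absorbed (there $\lambda_0=0$ is a root).
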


\begin{proof}
We show that one can always find a nonzero $\wv\in\mathbb{C}^{N}$ and a semiunitary
$\Vm$ with columns in $\ker(\wv^{H}\Fm)\cap\ker(\wv^{H}\Gm)$.
Then $\wv^{H}\Fm \Vm=\0$ and $\wv^{H}\Gm \Vm=0$, so every column of $[\Fm\Vm\ \ \Gm\Vm]$
lies in $\wv^{\perp}$, hence $\rank[\Fm\Vm\ \ \Gm\Vm]\le \dim(\wv^{\perp})=N-1$.

\emph{Case 1: $\rank(\Fm)<N$.}
Pick $\wv\in\ker(\Fm^{H})\setminus\{0\}$, so $\wv^{H}\Fm=0$.
If $\wv^H \Gm = 0$, then $\wv^H\Gm \Vm = 0$ for all $\Vm$. 
Otherwise, $\ker(\wv^{H}\Gm)$ has codimension $1$, thus dimension $N-1$.
Choose $\Vm$ whose columns form an orthonormal basis of $\ker(\wv^{H}\Gm)$.
By construction $\wv^{H}\Fm \Vm=0$ and $\wv^{H}\Gm \Vm=0$.

\emph{Case 2: $\rank(\Gm)<N$.}
Symmetrically, pick $\wv\in\ker(\Gm^{H})\setminus\{0\}$ and let $\Vm$ have
orthonormal columns spanning $\ker(\wv^{H}\Fm)$ (dimension $N-1$). Then
$\wv^{H}\Gm \Vm=0$ and $\wv^{H}\Fm \Vm=0$.

\emph{Case 3: $\rank(\Fm)=\rank(\Gm)=N$.}
Consider the matrix pencil $\Fm+t\Gm$ and the polynomial $p(t)=\det(\Fm+t\Gm)$,
which has degree $N$ with nonzero leading and constant coefficients
($\det(\Gm)\neq0$, $\det(\Fm)\neq0$). By the fundamental theorem of algebra,
$p$ has a root $t_\star\in\mathbb{C}$, so $\Fm+t_\star \Gm$ is singular.
Let $\wv\neq0$ satisfy $(\Fm+t_\star \Gm)^{H}\wv=0$, i.e.,
$\Fm^{H}\wv=-t_\star^{*}\Gm^{H}\wv$, so $\Fm^{H}\wv$ and $\Gm^{H}\wv$ are colinear.
Hence $\ker(\wv^{H}\Fm)=\ker(\wv^{H}\Gm)$ is a hyperplane of dimension $N-1$.
Choose $\Vm$ with orthonormal columns spanning this common kernel. Then
$\wv^{H}\Fm \Vm=0$ and $\wv^{H}\Gm \Vm=0$.

In all cases such a semiunitary $\Vm$ exists and satisfies
$\rank[\Fm\Vm\ \ \Gm\Vm]\le N-1$.
\end{proof}
For convenience, define
\[
\Scal(\Vm):=\Scal(\Fm,\Gm,\Vm)
:=
\mathrm{span}  \big(\Fm\Vm,\Gm\Vm\big),
\]
and $\Scal^{\perp}(\Vm):=\Scal^{\perp}(\Fm,\Gm,\Vm)$ the null space with dimension $d_\perp$;
and let $\Um$ and $\Um_\perp$ be semi-unitary matrices whose columns form
orthonormal bases of $\Scal$ and $\Scal^{\perp}$, respectively.

The refined-layer signal is precoded as $\xv_d=\Vm\wv_d$, where
$\wv_d\sim\CN(0,\Qm_d)$ with $\Qm_d\succeq0$ and $P_d:=\tr(\Qm_d)\le P$,
and $\Vm\in\mathcal{V}(\Fm,\Gm)$ is the precoder that can be optimized.
It then follows from Lemma~\ref{lemma:FG} that, for any $s$ and $\wv_d$, the columns of
  $\Hm\xv_d = (\Fm\Vm+s\Gm\Vm)\wv_d$ belong to  $\Scal(\Fm,\Gm,\Vm)$,
that is, the refined layer is confined to a fixed receiver subspace of dimension
at most $N-1$, independent of the realization of $s$.

\subsection{Coarse layer encoding and decoding}

The received signal is $\yv = \Hm \xv_p + \Hm \xv_d + \zv$.
By the data processing inequality,
\begin{align}
  I(\xv_p;\yv)
  &\ge I(\xv_p;\Um_\perp^H\yv)
   = I(\xv_p;\Um_\perp^H\Hm\xv_p+\Um_\perp^H\zv).
\end{align}
Choose a fixed unit vector $\vv_p$ and 
$$\xv_p=\vv_p w_p,$$ 
where $w_p\sim\P_{w_p}$ satisfies
$\E|w_p|^2\le P_p:= P-P_d$. Then, the achievable rate of the coarse layer is 
\begin{equation*}
  R_p(\P_{w_p},\Vm,\vv_p)
  :=
  I\!\left(
    w_p;\,
    (\Um_\perp^H\Fm \vv_p+s\,\Um_\perp^H\Gm  \vv_p)w_p+\zv_p
  \right),
\end{equation*}
where $\zv_p:=\Um_\perp^H\zv\sim\CN(0,\Id)$; The rate depends on $\Um_\perp$ only through the subspace $\Scal(\Vm)$ because the Gaussian noise is isotropic.
Taking any column of $\Um_\perp$, say, $\uv_\perp$, we obtain a scalar Ricean channel, with deterministic component
$\uv_\perp^H\Fm\vv_p$ and random component
$s\,\uv_\perp^H\Gm\vv_p\sim\CN\!\big(0,|\uv_\perp^H\Gm\vv_p|^2\big)$.

\subsection{Estimation of $s$ from $\xv_p$ via orthogonal projection}
\label{subsec:s_est2}

After decoding the coarse layer, we subtract its deterministic contribution and
define
\begin{equation}
  \tilde{\yv}
  :=
  \yv-\Fm\xv_p
  =
  s\,\Gm\xv_p + (\Fm+s\Gm)\Vm\wv_d + \zv .
\end{equation}
To eliminate the refined-layer signal, we project $\tilde{\yv}$ onto the
orthogonal complement
$\Scal^{\perp}$.
Define
\begin{equation}
  {\tilde{\yv}_\perp}
  :=
  \Um_\perp^H\tilde{\yv}
  =
  s\,\Um_\perp^H\Gm\xv_p + {\zv}_\perp,
  \qquad
  {\zv}_\perp\sim\CN(0,\Id_{d_\perp}),
\end{equation}
where $\Um_\perp^H\Fm\Vm=\Um_\perp^H\Gm\Vm=0$ by construction.
Thus, the refined-layer contribution is completely removed, and ${\tilde\yv}_\perp$
depends only on $(s,\xv_p)$ and noise.

Assuming $s\sim\CN(0,\sigma^2)$, the pair $(s,{\tilde\yv}_\perp)$ is jointly Gaussian, and
the MMSE estimator of $s$ from ${\tilde\yv}_\perp$ coincides with the LMMSE estimator,
\begin{equation}
  \hat{s}
  ={(\|\gv_\perp\|^2+\sigma^{-2}})^{-1}
  {\gv_\perp^H{\tilde\yv}_\perp},
  \label{eq:s_hat}
\end{equation}
where $\gv_\perp:= {\Um}_{\perp}^H\Gm\xv_p$.
The corresponding error variance is
\begin{equation}
  \sigma_{e_s}^2
  =
  \E\big[|s-\hat{s}|^2\big]
  =(\|\gv_\perp\|^2+\sigma^{-2})^{-1}
  \label{eq:s_mse}
\end{equation}
Conditioned on $\xv_p$, the estimator output satisfies
\[
  \hat{s}\mid \xv_p \sim \CN\!\big(0,\sigma^2-\sigma_{e_s}^2\big),
\]
and the estimation error $e_s:= s-\hat{s}$ is independent of $\hat{s}$. Note that $\|\gv_\perp\|^2$ depends on $\Um_\perp$ only via $\Scal(\Vm)$, so the estimation noise error depends only on $\Vm$. 

\subsection{Conditional LMMSE of $\wv_d$ given $(\hat{s},\xv_p)$}
\label{subsec:wd_cond_lmmse}

We decode the refined layer using the side information $\hat{s}$ from
\eqref{eq:s_hat}. First, we cancel the estimated random contribution of the
coarse layer:
\begin{equation}
  \tilde{\yv}_1
  :=
  \tilde{\yv}-\hat{s}\,\Gm\xv_p
  =
  (\Fm+\hat{s}\Gm)\Vm\wv_d
  + e_s\Gm(\xv_p + \Vm\wv_d)
  + \zv.
  \label{eq:y1_def}
\end{equation}
We then project onto the aligned refined-layer subspace $\Scal$.
\begin{equation}
  \yv_{\Scal}:= \Um^H\tilde{\yv}_1,
\end{equation}
where the projected Gaussian noise $\Um^H \zv$ is independent from the estimation noise $e_s$ that only depends on $\zv_\perp = \Um^H_{\perp} \zv$. 

Conditioned on $(\hat{s},\xv_p)$, we can compute the covariances $\Cm_{w y_{\Scal}\,|\,\hat{s},\xv_p}$ and $ \Cm_{y_{\Scal}y_{\Scal}\,|\,\hat{s},\xv_p}$ and
apply the LMMSE estimator
\begin{equation}
  \hat{\wv}_d(\hat{s}, \xv_p, \yv_{\Scal}) :=
  \Cm_{w y_{\Scal}\,|\,\hat{s},\xv_p}\,
  \Cm_{y_{\Scal}y_{\Scal}\,|\,\hat{s},\xv_p}^{-1}\,
  \yv_{\Scal},
  \label{eq:wd_cond_lmmse}
\end{equation}
where the conditional covariances depend on $\hat{s}$ through the effective
channel $(\Fm+\hat{s}\Gm)\Vm$ and on the disturbance induced by $e_s$.
Using standard methods, we can derive the corresponding conditional LMMSE covariance 
\small
\begin{equation}
  \Cm_e(\hat{s},\xv_p)
  :=
  \Big(
    \Qm_d^{-1}
    +
    \Vm^H(\Fm+\hat{s}\Gm)^H \Um(\Rm^{-1})\Um^H
    (\Fm+\hat{s}\Gm)\Vm
  \Big)^{-1},
  \label{eq:Ce_def}
\end{equation}
with
\begin{equation}
  \Rm
  := 
  \Id_{N-d_\perp}
  +
  \Um^H
  \sigma_{e_s}^2\,\Gm
  (\xv_p\xv_p^H+\Vm\Qm_d\Vm^H)
  \Gm^H \Um.
\end{equation}
\normalsize

Although \eqref{eq:wd_cond_lmmse} is linear in $\yv_{\Scal}$ for fixed $\hat{s}$ and $\wv_d$, the
overall mapping $\tilde{\yv}\mapsto\hat{\wv}_d$ is generally not linear,
since the filter coefficients depend on the random variable $\hat{s}$.
Equivalently, the receiver employs a random linear filter indexed by $\hat{s}$.

\subsection{Achievable rate}
\label{subsec:gmi_sideinfo}

We now apply the mismatched decoder with the nearest-neighbor metric as in Section~\ref{subsec:gaussian_lmmse}. Hence, the achievable rate for the refined layer,
\begin{equation}
  I(\wv_d;\yv\mid\xv_p,\hat{s}) = h(\wv_d) - h\!\left(\wv_d-\hat{\wv}_d\mid\yv,\xv_p,\hat{s}\right),
\end{equation}
is lower bounded by $\log\det(\Qm_d) - \E_{\hat{s},\xv_p}\!\left[ \log\det\!\big(\Cm_e(\hat{s},\xv_p)\big)\right]$. 
Recall that $\xv_p = \vv_p w_p$ with $w_p\sim \P_{w_p}$, we have the following achievable rate $ R_d(\P_{w_p},\Qm_d,\Vm,\vv_p)$: 
\begin{align}
   \E_{\hat{s},\xv_p}\!\biggl[
    &\log\det\Big(
    \Id\nonumber\\
    &\hspace{-0.5em}+ \Qm_d \Vm^H(\Fm+\hat{s}\Gm)^H 
    \Um(\Rm^{-1})\Um^H (\Fm+\hat{s}\Gm)\Vm  \Big)
  \biggr],
  \label{eq:Rd_cond_bound}
\end{align}
\normalsize
where the expectation is taken over the joint law of
$(\xv_p,\hat{s})$.

Combining the lower bounds on both layers, we get an achievable rate on the subspace alignment scheme. 

\begin{proposition}[Superposition coding with subspace alignment]
\label{prop:sa_achievable}
The channel capacity satisfies
\begin{equation}
  C(P)
  \;\ge\;
  \max_{\P_{w_p},\,\Vm,\,\vv_p,\,\Qm_d}
  R_{\mathrm{SA}}(\P_{w_p}, \Vm, \vv_p, \Qm_d),
  \label{eq:joint_opt}
\end{equation}
subject to the power constraints and to the alignment constraint
$\Vm \in \mathcal{V}(\Fm,\Gm)$, where
\begin{equation*}
  R_{\mathrm{SA}} := R_p + R_d.
\end{equation*}
\end{proposition}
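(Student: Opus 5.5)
The plan is to show that, for any admissible choice of $(\P_{w_p},\Vm,\vv_p,\Qm_d)$ with $\Vm\in\mathcal{V}(\Fm,\Gm)$ and $\E|w_p|^2+\tr(\Qm_d)\le P$, the quantity $R_p+R_d$ is a lower bound on $I(\xv;\yv)$ for the input $\xv=\vv_p w_p+\Vm\wv_d$ with independent $w_p$ and $\wv_d\sim\CN(\0,\Qm_d)$. The power constraint holds because $\Vm^H\Vm=\Id$ gives $\E\|\xv\|^2=\E|w_p|^2+\tr(\Qm_d)\le P$. Hence $C(P)\ge I(\xv;\yv)$, and taking the supremum over parameters yields \eqref{eq:joint_opt}.

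\emph{Chain rule.} Since $\xv$ is a function of $(w_p,\wv_d)$, and $(w_p,\wv_d)$ is recovered from $\xv$ whenever $\vv_p\notin\mathrm{range}(\Vm)$, I write $I(\xv;\yv)=I(w_p,\wv_d;\yv)=I(w_p;\yv)+I(\wv_d;\yv\mid w_p)$. In the degenerate case $\vv_p\in\mathrm{range}(\Vm)$, I instead use $I(\xv;\yv)\ge$ the rate of the two-layer superposition code with SIC directly; operationally, layered coding achieves $I(w_p;\yv)+I(\wv_d;\yv|w_p)$ in any case, which is the cleaner route.

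\emph{Coarse layer.} By data processing, $I(w_p;\yv)\ge I(w_p;\Um_\perp^H\yv)$. Lemma~\ref{lemma:FG} gives $\Um_\perp^H\Fm\Vm=\Um_\perp^H\Gm\Vm=0$, so $\Um_\perp^H\yv=\Um_\perp^H(\Fm+s\Gm)\vv_p w_p+\Um_\perp^H\zv$, and $\Um_\perp^H\zv\sim\CN(\0,\Id)$ is independent of $(s,w_p)$. This quantity is exactly $R_p$.

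\emph{Refined layer.} The quantity $\hat s$ is a function of $(\yv,\xv_p)$, so $I(\wv_d;\yv\mid\xv_p)=I(\wv_d;\yv,\hat s\mid\xv_p)\ge I(\wv_d;\yv\mid\xv_p,\hat s)$. For the last term I follow the argument of \eqref{eq:shift_inv_giv_y}--\eqref{eq:lin_rx_lb}, conditioned on $(\xv_p,\hat s)$:
\[
h(\wv_d\mid\xv_p,\hat s)-h(\wv_d-\hat\wv_d\mid\yv,\xv_p,\hat s)\ge \log\det\Qm_d-\E\log\det\Cm_e(\hat s,\xv_p).
\]
This uses three facts: $\wv_d$ is independent of $(\xv_p,\hat s)$ (to verify, since $\hat s$ depends only on $s$, $\xv_p$ and $\zv_\perp$); conditioning reduces entropy; and the Gaussian maximum-entropy bound with conditional error covariance $\Cm_e$. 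The identity $\log\det\Qm_d-\log\det\Cm_e=\log\det(\Id+\Qm_d\Vm^H(\Fm+\hat s\Gm)^H\Um\Rm^{-1}\Um^H(\Fm+\hat s\Gm)\Vm)$ then gives $R_d$ in \eqref{eq:Rd_cond_bound}.

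\emph{Main obstacle.} The hardest step is justifying $\Cm_e$ as the conditional error covariance. This requires computing the conditional second-order statistics of $\yv_{\Scal}$ given $(\hat s,\xv_p)$, where the disturbance $e_s\Um^H\Gm(\xv_p+\Vm\wv_d)+\Um^H\zv$ is uncorrelated with $\wv_d$ and has covariance $\Rm$. The key facts are:
\begin{itemize}
\item $e_s\sim\CN(0,\sigma_{e_s}^2)$ is independent of $(\hat s,\xv_p,\wv_d)$, by Gaussian MMSE orthogonality given $\xv_p$;
\item $e_s$ is independent of $\Um^H\zv$, by orthogonality of the projections;
\item $\E[e_s\wv_d]=0$.
\end{itemize}
With these in hand, the matrix inversion lemma gives \eqref{eq:Ce_def}. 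Projecting onto $\Um$ only discards information, which is consistent with a lower bound.
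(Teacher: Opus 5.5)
Your proposal is correct and follows the paper's own argument. It uses the chain rule over the two layers, data processing onto $\Um_\perp^H\yv$ to get $R_p$, and the conditional LMMSE/maximum-entropy bound given $(\xv_p,\hat s)$ to get $R_d$, and it adds justifications the paper leaves implicit: the step $I(\wv_d;\yv\mid\xv_p)\ge I(\wv_d;\yv\mid\xv_p,\hat s)$, the independence of $\wv_d$ from $(\xv_p,\hat s)$, and the uncorrelatedness of the $e_s$-disturbance.

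Two small cleanups. The invertibility caveat is unnecessary, since $\xv$ is a function of $(w_p,\wv_d)$ and $(w_p,\wv_d)\to\xv\to\yv$ is Markov, so $I(w_p,\wv_d;\yv)=I(\xv;\yv)$ always. Also, $e_s$ is only conditionally independent of $(\hat s,\wv_d,\Um^H\zv)$ given $\xv_p$, because $\sigma_{e_s}^2$ depends on $|w_p|^2$ when the coarse layer is random.
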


A particularly relevant special case arises when $\P_{w_p}$ is chosen to be a Dirac measure, so that $\xv_p = \vv_p w_p$ becomes deterministic. In this situation, the achievable rate reduces to $R_{\mathrm{SA}} = R_d(\Qm_d,\Vm,\vv_p)$, because the coarse layer does not convey any information. Under these conditions, the proposed strategy effectively specializes to a purely pilot-based transmission scheme. 
In many parameter regimes, the use of a deterministic pilot entails only a negligible rate loss compared to superposition coding, since the primary role of the coarse layer is to facilitate reliable estimation of the state variable $s$.

\section{Capacity upper bound}
\label{sec:cap_ub}

We upper bound the mutual information using the duality (variational) bound
\begin{align}
  I(\xv;\yv)
  &= h(\yv)-h(\yv\mid\xv) \nonumber\\
  &\le \E\!\left[\log\frac{1}{q(\yv)}\right] - h(\yv\mid\xv),
  \label{eq:dual_start}
\end{align}
valid for any pdf $q(\yv)$, where the expectation is with respect to the true
output distribution $P_Y$ induced by $P_XP_{Y|X}$.

A standard approach is to optimize~\eqref{eq:dual_start} over a smoothly
parameterized family $\{q_\theta,\theta\in\Theta\}$. In this work, we consider
the regularized Gamma family from Lapidoth--Moser~\cite{LapidothMoser2003Duality}:
\begin{equation}
  q_{\theta}(\yv)
  =
  c_{\theta}\,
  \big(\|\yv\|_{\pmb S}^2 + \delta\big)^{\alpha - 1}\,
  \|\yv\|_{\pmb S}^{2(1-N)}\,
  e^{-({\|\yv\|_{\pmb S}^2 + \delta})/{\beta}},
  \label{eq:qtheta}
\end{equation}
where $\theta := (\alpha>0,\beta>0,\delta\ge 0,\pmb S\succ 0)$,
$\|\yv\|_{\pmb S}^2 := \yv^H\pmb S\,\yv$, and
$ c_{\theta}
  :=
  \frac{\Gamma(N)\det(\pmb S)}{\pi^N \beta^\alpha \Gamma(\alpha,\delta/\beta)},
  \Gamma(\alpha,\xi) := \int_{\xi}^{\infty} t^{\alpha-1}e^{-t}\,dt.
$
For convenience, we restrict the input to have bounded support
$\Xcal(r)=\{\xv:\|\xv\|\le r\}$. This ensures numerical stability and is without
loss of generality, since the resulting upper bound depends on $r$ and can be
made arbitrarily large. We further set $\delta=0$ and assume $\alpha\le N$, and
choose 
  $\beta := \beta(\Qm_x,\Sm)\;=\; \frac{1}{\alpha}\,\E\!\left[\|\yv\|_{\pmb S}^2\right]$.
\NEW{
Substituting the choice of $q_\theta$, the differential cross-entropy term in~\eqref{eq:dual_start} is written as
\begin{multline}
    \E\!\left[\log\frac{1}{q(\yv)}\right]
    = \alpha\log\beta(\Qm_x, \Sm) + \log \frac{\Gamma(\alpha)}{\Gamma(N)} + \alpha + N \log \pi
    \\[0.2em]
    - \log \det (\Sm) + (N-\alpha) \E \left[ \log \big( \|\yv\|_{\pmb S}^2\big) \big) \right]. \label{eq:diff_cross_entrop_q}
\end{multline}
}
Applying Jensen’s inequality to the conditional log-moment over $(s,\zv)$ yields
\begin{equation}
  \E_{s,\zv}\!\left[\log\!\big(\|\yv\|_{\pmb S}^2\big)\,\middle|\,\xv\right]
  \le
  \log\!\left(\E_{s,\zv}\!\left[\|\yv\|_{\pmb S}^2\,\middle|\,\xv\right]\right),
  \label{eq:ineq_log_squarred_y_giv_x}
\end{equation}
where
$  \E_{s,\zv}\!\left[\|\yv\|_{\pmb S}^2\,\middle|\,\xv\right]
  = \gamma(\Sm,\xv),$
with
\begin{equation}
  \gamma(\Sm,\xv)
  :=
  \tr(\pmb S)
  +\xv^H\!\big(\Fm^H\pmb S\Fm+\sigma^2\Gm^H\pmb S\Gm\big)\xv.
  \label{eq:cond_second_moment}
\end{equation}
Taking expectation over $\xv$ gives
\begin{equation}
  \beta(\Qm_x,\Sm)
  =
  \frac{1}{\alpha}
  \tr\!\Big(
    \Sm\big(\Id+\Fm\Qm_x\Fm^H+\sigma^2\Gm\Qm_x\Gm^H\big)
  \Big).
\end{equation}
\NEW{
Plugging \eqref{eq:ineq_log_squarred_y_giv_x} and \eqref{eq:cond_second_moment}  in \eqref{eq:diff_cross_entrop_q}, with \eqref{eq:h(y|x)} the bound in~\eqref{eq:dual_start} is written as
\begin{multline}
    I(\xv;\yv)
    \le \alpha\log\beta(\Qm_x, \Sm) + \log \frac{\Gamma(\alpha)}{\Gamma(N)} + \alpha - N 
    \\[0.2em]
    - \log \det (\Sm) + \E_{\xv} \left[ \log \frac{\gamma(\Sm, \xv)^{N-\alpha}}{1+\sigma^2 |\Gm\xv\|^2} \right],
    \label{eq:mut_inf_up_bound_x_S}
\end{multline}
}and upper bounding the expectation over $\xv$
by a supremum over $\Xcal(r)$ yields the following result.
\begin{proposition}
The channel capacity is upper bounded as
\begin{equation}
  C(P,r)
  \le
  \max_{\Qm_x\succeq0\atop\tr(\Qm_x)\le P}
  \inf_{\alpha\in(0,N]\atop \Sm\succeq0}
  R_{\mathrm{UB}}(\alpha,\Sm,\Qm_x),
\end{equation}
where
\begin{multline}
  R_{\mathrm{UB}}(\alpha,\Sm,\Qm_x)
  :=
  \alpha\log\beta(\Qm_x,\Sm)
  +\log\frac{\Gamma(\alpha)}{\Gamma(N)}
  +\alpha-N
  \\[0.2em]
  -\log\det(\pmb S)+\sup_{\xv\in\Xcal(r)}
  \log\!\left(
    \frac{\gamma(\Sm,\xv)^{\,N-\alpha}}
         {1+\sigma^2\|\Gm\xv\|^2}
  \right).
  \label{eq:dual_ub_ratio_general_alpha}
\end{multline}
\end{proposition}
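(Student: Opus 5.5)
The plan is to start from the duality bound \eqref{eq:dual_start}, which holds for every input law $P_X$ and every density $q$. Fix an arbitrary input distribution supported on $\Xcal(r)$ with $\E\|\xv\|^2\le P$, and let $\Qm_x:=\E[\xv\xv^H]$ be its covariance, so that $\tr(\Qm_x)\le P$. For this fixed law, choose $q=q_\theta$ from \eqref{eq:qtheta} with $\delta=0$, an arbitrary $\alpha\in(0,N]$, an arbitrary $\Sm\succ0$, and $\beta=\beta(\Qm_x,\Sm)$.

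The first step is to verify the cross-entropy expression \eqref{eq:diff_cross_entrop_q}. With $\delta=0$ we have $\Gamma(\alpha,0)=\Gamma(\alpha)$, and
\[
-\log q_\theta(\yv)=-\log c_\theta+(N-\alpha)\log\|\yv\|_{\Sm}^2+\|\yv\|_{\Sm}^2/\beta .
\]
Taking expectations and using $\E\|\yv\|_{\Sm}^2=\alpha\beta$ turns the last term into the constant $\alpha$. Because the input is Gaussian-mixed with noise, $\yv$ has a density, so $\E\log\|\yv\|_{\Sm}^2$ is finite.

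The second step handles the log-moment. Since $\alpha\le N$, the coefficient $N-\alpha$ is nonnegative, so conditional Jensen \eqref{eq:ineq_log_squarred_y_giv_x} can be applied inside the outer expectation. The conditional second moment is computed as $\gamma(\Sm,\xv)$ in \eqref{eq:cond_second_moment}, using $s\sim\CN(0,\sigma^2)$ independent of $\zv$ and the vanishing of the cross terms. Subtracting \eqref{eq:h(y|x)} then gives \eqref{eq:mut_inf_up_bound_x_S}; the $N\log\pi$ terms cancel, leaving $-N$ from $N\log e$.

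The third step bounds $\E_{\xv}[\cdot]$ by $\sup_{\xv\in\Xcal(r)}$, which is valid because the support lies in $\Xcal(r)$. The resulting quantity depends on $P_X$ only through $\Qm_x$, and it equals $R_{\mathrm{UB}}(\alpha,\Sm,\Qm_x)$.

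The last step assembles the bound. Because the inequality holds for every $\alpha\in(0,N]$ and every $\Sm\succ0$, we may take the infimum. Extending to $\Sm\succeq0$ costs nothing: singular $\Sm$ make $-\log\det\Sm=+\infty$, so they do not lower the infimum. This yields $I(\xv;\yv)\le\inf R_{\mathrm{UB}}(\alpha,\Sm,\Qm_x)$. Maximizing over input laws, and hence over admissible $\Qm_x$, gives $C(P,r)\le\max_{\Qm_x}\inf_{\alpha,\Sm}R_{\mathrm{UB}}$.

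I expect the main subtlety to be bookkeeping rather than a real obstacle. One point is the sign of $N-\alpha$, which is what permits Jensen in the correct direction. The other is ensuring that the $\beta$ choice, which depends on $P_X$, is made only after the input law is fixed; this is legitimate because the duality bound allows $q$ to depend on $P_X$.
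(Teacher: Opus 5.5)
Your proposal is correct and follows the paper's argument step for step: the duality bound with the regularized Gamma density at $\delta=0$ and $\beta=\E\|\yv\|_{\Sm}^2/\alpha$, conditional Jensen on the log-moment (valid because $N-\alpha\ge 0$), subtraction of \eqref{eq:h(y|x)}, and replacement of $\E_{\xv}$ by the supremum over $\Xcal(r)$. One small imprecision is your remark on singular $\Sm$: at $\Sm=0$ the terms $\alpha\log\beta$ and $-\log\det\Sm$ give $-\infty+\infty$ rather than $+\infty$, but this is harmless, because $R_{\mathrm{UB}}(\alpha,c\Sm,\Qm_x)=R_{\mathrm{UB}}(\alpha,\Sm,\Qm_x)$ for every $c>0$, so restricting to $\Sm\succ 0$ loses nothing.
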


Even without solving the minimization, valid upper bounds are obtained by fixing
$\alpha$ and $\Sm$. In particular, choosing $\alpha=N$ and
$\Sm=\det(\pmb\Sigma_{\yv})^{1/N}\pmb\Sigma_{\yv}^{-1}$, where
\begin{equation}
  \pmb\Sigma_{\yv}
  :=
  \Id+\Fm\Qm_x\Fm^H+\sigma^2\Gm\Qm_x\Gm^H,
\end{equation}
yields
\begin{equation}
  R_{\mathrm{UB}}
  = \alpha\log\beta
  = \log\det(\pmb\Sigma_{\yv}).
\end{equation}
Therefore,
\begin{equation}
  C(P)
  \le
  \max_{\Qm_x\succeq0\atop\tr(\Qm_x)\le P}
  \log\det\!\big(
    \Id+\Fm\Qm_x\Fm^H+\sigma^2\Gm\Qm_x\Gm^H
  \big),
  \label{eq:ub0}
\end{equation}
which is tight when $\sigma^2=0$.

\subsection{Small $\sigma$ regime}

When $\sigma$ is small, the trivial upper bound~\eqref{eq:ub0} yields
$
C(P) \le C_G(P) + O(\sigma^2),
$
where $C_G(P)$ denotes the capacity of the deterministic channel obtained for
$\sigma=0$. The $O(\sigma^2)$ term is positive and scales proportionally to
$\sigma^2$ as $\sigma\to 0$.

\NEW{
This $O(\sigma^2)$ term is obtained from~\eqref{eq:ub0} as follows
\begin{align}
    C(P) & \leq
    \max_{\Qm_x\succeq0\atop\tr(\Qm_x)\le P} \big( \log \det \pmb\Sigma_0
    \nonumber \\
    & \qquad \quad + \log \det (\Id + \sigma^2 \pmb\Sigma_0^{-1} \Gm \Qm_x \Gm^H) \big) \\
    & \leq C_G(P) \nonumber \\
    & \qquad \quad + \max_{\Qm_x\succeq0\atop\tr(\Qm_x)\le P} \log \det (\Id + \sigma^2 \pmb\Sigma_0^{-1} \Gm \Qm_x \Gm^H) \\ 
    & \leq C_G(P) + \max_{\Qm_x\succeq0\atop\tr(\Qm_x)\le P} \sigma^2 \tr (\pmb\Sigma_0^{-1} \Gm \Qm_x \Gm^H) \label{eq:sec_ord_mat_per_expan} \\
    & \leq C_G(P) + \sigma^2 f(P) \label{eq:sigma_fP} \\
    & \leq C_G(P) + O(\sigma^2,P)
\end{align}
where $\pmb\Sigma_0 = \Id + \Fm \Qm_x \Fm^H$ is the covariance of the deterministic channel, \eqref{eq:sec_ord_mat_per_expan} follows from
$\log\det(\Id + A)\le\tr(A)$ for $A\succeq0$ and the fact that the perturbation
term scales as $\sigma^2$, in~\eqref{eq:sigma_fP} we define
\begin{equation}
    f(P) = \max_{\Qm_x\succeq0\atop\tr(\Qm_x)\le P} 
    \tr (\pmb\Sigma_0^{-1} \Gm \Qm_x \Gm^H),
\end{equation}
where $f(P)$ is a function of the power constraint. If $f(P)$ is bounded, then the perturbation term is $O(\sigma^2)$ for all $P$.
}

On the other hand, from the achievable lower bound $R_G$ in~\eqref{eq:RG}, we
obtain
$
C(P) \ge C_G(P) + O(\sigma^2),
$
where the $O(\sigma^2)$ term is negative and of order $\sigma^2$. 
Indeed, since
$s$ is Gaussian, the perturbation
$
\Delta(s)
:=
(\Fm+s\Gm)\Qm_x(\Fm+s\Gm)^H - \Fm\Qm_x\Fm^H
$
exhibits exponentially decaying tails: as $\sigma\to 0$, the probability that
$\|\Delta(s)\|\ge 1$ is exponentially small. 
\NEW{
In this case, the $O(\sigma^2)$ term is obtained from \eqref{eq:gaussian_lb} as follows
\begin{align}
    C(P) & \ge \max_{\Qm_x\succeq0\atop\tr(\Qm_x)\le P}  \Big( \E_{s}\big[ \log \det (\Id + (\Fm + s \Gm)\Qm_x (\Fm + s \Gm)^H \big] \nonumber \\ 
    & \qquad \quad 
    - \log \big(1 + \sigma^2 \tr(\Gm \Qm_x \Gm^H) \big) \Big) \\
    & =  \max_{\Qm_x\succeq0\atop\tr(\Qm_x)\le P} \Big( \log\det \pmb\Sigma_0  + \E_{s} \big[ \log\det ( \Id + \pmb\Sigma_0^{-1} \Delta(s) )\big]   \nonumber \\ 
    & \qquad \quad 
    - \log \big(1 + \sigma^2 \tr(\Gm \Qm_x \Gm^H) \big) \Big). \label{eq:lb_sigma_0_sep}
\end{align}
Let $\Qm_x^* = \arg \max_{\Qm_x\succeq0\atop\tr(\Qm_x)\le P} \log \det \pmb\Sigma_0$, so that
\begin{equation}
    C_G(P) = \log \det \pmb\Sigma_0^*,
\end{equation}
and $\pmb\Sigma_0^*$ is the covariance of the deterministic channel evaluated at the optimal input covariance $\Qm_x^*$.
Then, from the second-order Taylor bound for the log-determinant, i.e., $\log \det (\Id + \Am) \ge \tr \Am - \frac{1}{2} \tr (\Am \Am)$ with $\|\Am\|\le 1$, and the first-order Taylor bound for the log, i.e., $\log (1+x) \le x$ with $|x| \le 1$, \eqref{eq:lb_sigma_0_sep} is lower bounded by
\begin{align}
    C(P)
    & \ge C_G(P) +  \E_{s} \big[ \tr({\pmb\Sigma_0^*}^{-1} \Delta(s)) \nonumber \\ 
    & \qquad \quad - \frac{1}{2} \tr({\pmb\Sigma_0^*}^{-1} \Delta(s) {\pmb\Sigma_0^*}^{-1} \Delta(s)) \big] - \sigma^2 \tr(\Gm \Qm_x^* \Gm^H ) \label{eq:lb_tr_ineq} \\
    & \ge C_G(P) + \sigma^2 \tr({\pmb\Sigma_0^*}^{-1} \Gm \Qm_x^* \Gm^H ) - \sigma^2  \| {\pmb\Sigma_0^*}^{-1} \Gm \Qm_x^* \Fm^H \|_F^2 \nonumber \\
    & \qquad \quad  - \sigma^2 \tr(\Gm \Qm_x^* \Gm^H) -\sigma^4 \tr(({\pmb\Sigma_0^*}^{-1} \Gm \Qm_x^* \Gm^H)^2) \\
    & \ge C_G(P) + O(\sigma^2),
\end{align}
where $\Delta(s) = s^* \Fm \Qm_x \Gm^H + s \Gm \Qm_x \Fm^H + |s|^2 \Gm \Qm_x \Gm^H$. Since $\mathbb{E}[s]=0$, all first-order terms vanish after averaging, and the leading correction is of order $\sigma^2$.
}

Therefore, in this regime, the capacity is indeed $C_G(P) + O(\sigma^2)$. 

\subsection{High SNR regime}

In the high-SNR regime $P\to\infty$, it follows from~\eqref{eq:RG} that
\[
R_G = (N-1)\log P + O(1),
\]
where the $O(1)$ term remains bounded as $P\to\infty$. Moreover, the pilot-based
scheme introduced in Section~IV is also able to achieve
$(N-1)\log P + O(1)$.

To establish a matching upper bound, we set $\alpha=N-1$ and $\Sm=\Id$ in
\eqref{eq:dual_ub_ratio_general_alpha}. After straightforward algebra, we obtain
the upper bound
\begin{multline}
  (N-1)\log\!\left(
    \frac{1}{N-1}\,\E\!\left[\gamma(\Id,\xv)\right]
  \right)
  + \log\frac{1}{N-1} - 1
  \\
  + \log \max\!\left\{
    N,\;
    \lambda_{\max}\!\left(
      \Id + \sigma^{-2}(\Gm^H\Gm)^{-1}\Fm^H\Fm
    \right)
  \right\},
\end{multline}
which scales as $(N-1)\log P + O(1)$, since
$\E[\gamma(\Id,\xv)]$ grows linearly with $P$.

Consequently, the channel has $N-1$ degrees of freedom. As in the small-$\sigma$
regime, obtaining tight expressions for the $O(1)$ terms in both the upper and
lower bounds remains an interesting direction for future work.

\section{Numerical Applications and Conclusion}

\begin{figure}
\includegraphics[width=\columnwidth]{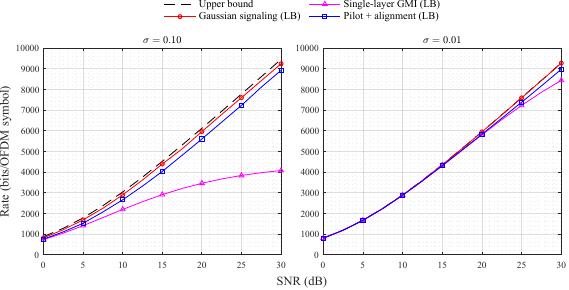}
\caption{Comparison of capacity bounds for Doppler-OFDM channel with $1024$ subcarriers. Left: $\sigma=0.1$. Right: $\sigma=0.01$.}
\label{fig:N1024}
\end{figure}

Numerical results are obtained using an OFDM-based NTN channel model. The carrier frequency is set to $f_c = 2$~GHz, the subcarrier
spacing to $\Delta f = 15$~kHz, and the system bandwidth to $20$~MHz. The sampling
period is $T_s = 1/(\Delta f \cdot 1024)$, corresponding to a standard OFDM
numerology. The multipath propagation follows the 3GPP NTN--TDL-A channel model~\cite{3gpp2020release15} in Table~\ref{tab1}.

\NEW{
The NTN-TDL-A model provides a tapped-delay-line (TDL) representation where each tap $k$ is characterized by a normalized delay $\tau_{k,\text{model}}$, an average tap power $P_k$ (in dB), and a fading distribution. The normalized delays are scaled to achieve a desired RMS delay spread $DS_{\text{desired}}$~\cite{3gpp2025release19} according to
\begin{equation}
\tau_k~[\text{ns}] = \tau_{k,\text{model}} \cdot DS_{\text{desired}}~[\text{ns}],
\end{equation}
with $DS_{\text{desired}}=100$ ns for the considered LEO satellite case. Then, the resulting continuous-time delays are mapped to discrete-time sample indices using the sampling period $T_s \approx 65.104$ ns. For each tap, the sample index can be computed as
\begin{equation}
n_k = \mathrm{round}\!\left(\frac{\tau_k}{T_s}\right).
\end{equation}

Using the normalized delays from Table~\ref{tab1}, the scaled delays are
$\tau_k \approx (0,\;108.11,\;284.16)$ ns, resulting in sample indices $n_k = (0,\;2,\;4)$.
The average tap powers are converted from dB to linear scale. Since each tap follows Rayleigh fading, the complex tap coefficient is generated as
\begin{equation}
\alpha_k = \sqrt{P_k}\, g_k, \qquad g_k \sim \mathcal{CN}(0,1),
\end{equation}
such that $\mathbb{E}\{|\alpha_k|^2\}=P_k$. Finally, the discrete-time channel impulse response is formed as
\begin{equation}
h_\ell = \sum_{k}^{L-1} \alpha_k\,\delta[\ell-n_k].
\end{equation}
}
\NEW{
\begin{table}[htbp]
\caption{NTN-TDL-A Channel Model}
\begin{center}
\begin{tabular}{|c|c|c|c|}
\hline
\textbf{Tap \#}& \textbf{Normalized Delay}& \textbf{Power (dB)}& \textbf{Fading Distribution}\\
\hline
1 & 0 & 0 & Rayleigh \\
\hline
2 & 1.0811 & -4.675 & Rayleigh \\
\hline
3 & 2.8416 & -6.482 & Rayleigh \\
\hline
\end{tabular}
\label{tab1}
\end{center}
\end{table}

The effective frequency-domain channel matrix is computed for one OFDM symbol and linearized with respect to the normalized Doppler shift, yielding the matrices $\Fm$ and $\Gm$ used in the abstract model $\Hm=\Fm+s\Gm$. The residual Doppler coefficient $s$ is modeled as a zero-mean complex Gaussian random variable with variance $\sigma^2$, constant over one OFDM symbol.
}

In Fig.\ref{fig:N1024}, we plot the capacity bounds derived in this work for moderate and small Doppler tracking error, $\sigma = 0.1$ and $\sigma=0.01$, respectively. We see that the Gaussian input lower bound with optimal decoding is close to the derived upper bound in this case. If the same input distribution is used but with suboptimal but practically used LMMSE receiver with nearest neighbor decoding metric, then we see that even the rank-one uncertainty can be detrimental and saturate the achievable rate even at moderate SNR. However, with the proposed superposition coding and subspace alignment scheme, we can achieve near optimal performance without sacrificing practicality. In this example, the coarse layer does not contain information and corresponds to the pilot based scheme. 

In summary, our information-theoretic analysis shows that the proposed subspace alignment scheme efficiently mitigates Doppler-induced inter-carrier interference and achieves near-capacity performance with low receiver complexity.

\newpage

\bibliographystyle{IEEEtran}
\bibliography{IEEEabrv,references}

\end{document}